\newtheorem{thm}{Theorem}
\newcommand{\cpsa}{\textsc{cpsa}}
\newcommand{\cn}[1]{\ensuremath{\operatorname{\mathsf{#1}}}}
\newcommand{\fn}[1]{\ensuremath{\operatorname{\mathit{#1}}}}
\newcommand{\sdom}{\fn{Dom}}
\newcommand{\sran}{\fn{Ran}}
\newcommand{\vars}{\fn{Vars}}
\newcommand{\seq}[1]{\ensuremath{\langle#1\rangle}}
\newcommand{\enc}[2]{\ensuremath{\{\!|#1|\!\}_{#2}}}
\newcommand{\inbnd}{\mathord -}
\newcommand{\outbnd}{\mathord +}
\newcommand{\srt}[1]{\ensuremath{\mathsf{#1}}}
\newcommand{\all}[1]{\mathop{\forall#1}}
\newcommand{\some}[1]{\mathop{\exists#1}}
\newcommand{\prefix}[2]{#1\dagger#2}
\newcommand{\init}{\fn{init}}
\newcommand{\resp}{\fn{resp}}
\newcommand{\form}{\mathcal{K}}
\newcommand{\sent}{\mathcal{S}}
\newcommand{\lang}{\mathcal{L}}
\newcommand{\alg}[1]{\ensuremath{\mathfrak#1}}
\newcommand{\alga}{\alg{A}}
\newcommand{\rl}{\fn{rl}}
\newcommand{\skel}{\mathsf{k}}
\newcommand{\insta}{\mathsf{i}}
\newcommand{\nodes}{\fn{nodes}}
\newcommand{\evt}{\fn{evt}}
\newcommand{\orig}{\mathcal{O}}
\title{Deducing Security Goals From \\ Shape Analysis Sentences}
\author{John D.\ Ramsdell}
\begin{document}
\maketitle

\begin{abstract}
Guttman presented a model-theoretic approach to establishing security
goals in the context of strand space theory.  In his approach, a run
of the Cryptographic Protocol Shapes Analyzer ({\cpsa}) produces
models that determine if a goal is satisfied.

This paper presents a method for extracting a sentence that completely
characterizes a run of {\cpsa}.  Logical deduction can then be used to
determine if a goal is satisfied.  This method has been implemented
and is available to all.
\end{abstract}

\section{Introduction}

\emph{This revision updates the strand-oriented protocol language
  described in Section~\ref{sec:sas} to one that has been shown to
  be useful in practice.  The November 2014 revision of the January
  2012 paper corrects several minor errors and adds a description of a
  node-oriented protocol language in
  Appendix~\ref{sec:node-oriented}.}

A central problem in cryptographic protocol analysis is to determine
whether a formula that expresses a security goal about behaviors
compatible with a protocol is true.  Following~\cite{Guttman09}, a
security goal is a quantified implication:
\begin{equation}\label{eqn:security goal}
\all{\vec{x}}(\Phi_0\supset\bigvee_{1\le i\le n}\some{\vec{y}_i}\Phi_i).
\end{equation}

The hypothesis~$\Phi_0$ is a conjunction of atomic formulas describing
regular (honest) behavior.  Each disjunct~$\Phi_i$ that makes up the conclusion
is also a conjunction of atomic formulas.  When~$\Phi_i$ describes
desired behaviors of other regular participants, then the formula is
an \emph{authentication} goal.  The goal says that each run of the
protocol compatible with~$\Phi_0$ will include the regular behavior
described by one of the disjuncts.  When $n=0$, the goal's conclusion
is false.  If~$\Phi_0$ mentions an unwanted disclosure,
(\ref{eqn:security goal}) says the disclosure cannot occur, thus a
security goal with $n=0$ expresses a \emph{secrecy} goal.

Guttman~\cite{Guttman09} presented a model-theoretic approach to
establishing security goals in the context of strand space theory.  In
that setting, a \emph{skeleton} describes regular behaviors compatible
with a protocol.  For skeleton~$k$ and formula~$\Phi$, he defined
$k,\alpha\models\Phi$ to mean that the conjunction of atomic formulas
that make up~$\Phi$ is satisfied in~$k$ with variable
assignment~$\alpha$.

A \emph{realized} skeleton is one that includes enough regular
behavior to specify all the non-adversarial part of an execution of
the protocol.  In a realized skeleton, its message transmissions
combined with possible adversarial behavior explain every message
reception in the skeleton.

In strand space theory, a \emph{homomorphism} is a
structure-preserving map~$\delta$ that shows how the behaviors in one
skeleton are reflected within another.  As skeletons serve as models,
homomorphisms preserve satisfaction for conjunctions of atomic
formulas.

The Cryptographic Protocol Shapes Analyzer ({\cpsa}) constructs
homomorphisms from a skeleton~$k_0$ to realized
skeletons~\cite{cpsa09}.  If {\cpsa} terminates, it generates a set of
realized skeletons~$k_i$ and a set of homomorphisms $\delta_i\colon
k_0\mapsto k_i$.  These realized skeletons are all the minimal,
essentially different skeletons that are homomorphic images of~$k_0$
and are called the \emph{shapes} of the analysis.

Guttman proposed a recipe for evaluating goal (\ref{eqn:security
  goal}) based on the following two technical results.
\begin{itemize}
\item For any security hypothesis~$\Phi_0$ there is a skeleton~$k_0$ that
  characterizes it in the sense that for all~$k$:
  $$\some{\alpha}k,\alpha\models
  \Phi_0\mbox{ iff }\some{\delta}\delta\colon k_0\mapsto k$$
\item There exists a realized skeleton that is a counterexample to
  (\ref{eqn:security goal}) iff there exists some shape in the
  analysis of~$k_0$ that is a counterexample.
\end{itemize}
These two results justify the following procedure.
\begin{enumerate}
\item Construct a characteristic skeleton~$k_0$ for~$\Phi_0$.
\item Ask {\cpsa} for the shapes produced by analyzing~$k_0$.
\item As {\cpsa} delivers shapes, check that each satisfies some
  disjunct~$\Phi_i$.
\item If the answer is no, this shape is a counterexample to
  (\ref{eqn:security goal}).
\item If {\cpsa} terminates with no counterexample, then
  (\ref{eqn:security goal}) is achieved.
\end{enumerate}

\paragraph{Implementing Security Goals.}

{\cpsa} now has support for security goals, but not as specified by
Guttman.  Part of the reason for the difference is that the details of
the formalism that underlies the {\cpsa}
implementation~\cite{cpsaspec09} dictate changes to the logic of
security goals.  These details will be elaborated later in this paper.

The key difference is a change in perspective.  Instead of finding a
formula that characterizes a security hypothesis, {\cpsa} includes a
tool that extracts a sentence that characterizes a shape analysis.
This so called \emph{shape analysis sentence} is special in that
it encodes everything that can be learned from the shape analysis.

Given a shape analysis sentence, a security goal is achieved if the
goal can be deduced from the sentence.  {\cpsa} includes a Prolog
program that translates shape analysis sentences into
Prover9~\cite{prover9} syntax.  Typically, a goal that is a theorem is
quickly proved by Prover9.

There is another advantage to this approach.  It can be tedious to
generate security goals.  Realistic ones can be large and complicated.
An easy way to create one is to modify a shape analysis sentence.
This typically involves deleting parts of the conclusion.

There is a disadvantage to this approach.  When a goal cannot be
deduced from a shape analysis sentence, one cannot conclude that there
is a counterexample.  It could be simply that the sentence is not
relevant to the security goal.

\paragraph{Motivating Example.}

The running example used throughout this paper is now presented.  An
informal version of the example is presented here, and the example
with all of the details filled is in Section~\ref{sec:example}.

The following protocol is a simplified version of the Denning-Sacco
key distribution protocol~\cite{DenningSacco81} due to Bruno
Blanchet~\cite{Blanchet11}.
$$\begin{array}{r@{{}:{}}l}
A\to B&\enc{\enc{s}{a^{-1}}}{b}\\
B\to A&\enc{d}{s}
\end{array}$$
Alice~($A$) freshly generates symmetric key~$s$, signs the symmetric
key with her private uncompromised asymmetric key $a^{-1}$ and intends
to encrypt it with Bob's~($B$) uncompromised asymmetric key $b$.
Alice expects to receive data~$d$ encrypted, such that only Alice and
Bob have access to it.

The protocol was constructed with a known flaw for expository
purposes, and as a result the secret is exposed due to an
authentication failure.  The protocol does not prevent Alice from
using a compromised key~$b'$, so that Mallory~($M$) can
perform this man-in-the-middle attack:
$$\begin{array}{r@{{}:{}}l}
A\to M&\enc{\enc{s}{a^{-1}}}{b'}\\
M\to B&\enc{\enc{s}{a^{-1}}}{b}\\
B\to E&\enc{d}{s}
\end{array}$$

The protocol fails to provide a means for Bob to ensure the original
message was encrypted using his key.  The authentication failure is
avoided with this variation of the protocol:
\begin{equation}\label{eq:amended protocol}
\begin{array}{r@{{}:{}}l}
A\to B&\enc{\enc{s,b}{a^{-1}}}{b}\\
B\to A&\enc{d}{s}
\end{array}
\end{equation}

In strand space theory, a \emph{strand} is a linearly ordered sequence
of events $e_0\Rightarrow\cdots\Rightarrow e_{n-1}$, and an
\emph{event} is either a message transmission $\bullet\to$ or a
reception $\bullet\gets$.  In {\cpsa}, adversarial behavior is not
explicitly represented, so strands always represent regular behavior.

Regular behavior is constrained by a set of roles that make up the
protocol.  In this protocol, Alice's behaviors must be compatible with
an initiator role, and Bob's behaviors follow a responder role.
\begin{equation}\label{eq:protocol}
\begin{array}{r@{\qquad}l}
\xymatrix@=.6em{\raisebox{-1ex}[0ex][0ex]{\strut\init}\\
  \bullet\ar@{=>}[d]\ar[r]&\enc{\enc{s}{a^{-1}}}{b}\\
  \bullet&\enc{d}{s}\ar[l]}&
\xymatrix@=.6em{&\raisebox{-1ex}[0ex][0ex]{\strut\resp}\\
\enc{\enc{s}{a^{-1}}}{b}\ar[r]&\bullet\ar@{=>}[d]\\
\enc{d}{s}&\bullet\ar[l]}
\end{array}
\end{equation}

The important authentication goal from Bob's perspective is that if an
instance of a responder role runs to completion, there must have been
an instance of the initiator role that transmitted its first message.
Furthermore, assuming the symmetric key is freshly generated, and the
private keys are uncompromised, the two strands agree on keys used for
signing and encryption.

A {\cpsa} analysis of the authentication goal requires two inputs, a
specification of the roles that make up the protocol, as in
Eq.~\ref{eq:protocol}, and a question about runs of the protocol.  The
question in this case is the hypothesis of Eq.~\ref{eq:shape
  analysis}, that an instance of the responder role ran to completion.
In these diagrams, a strand instantiated from a role is distinguished
from a role by placing messages above communication arrows, and
$\succ$ is used to assert an event occurred after another.

\begin{equation}\label{eq:shape analysis}
\begin{array}{rcl}
\xymatrix@R=1em@C=3em{\raisebox{-0.5ex}[0ex][0ex]{\strut\resp}\\
\bullet\ar@{=>}[d]&\ar[l]_{\enc{\enc{s}{a^{-1}}}{b}}\\
\bullet\ar[r]^{\enc{d}{s}}&}&
\raisebox{-4.0ex}{implies}&
\xymatrix@R=1em@C=3em{\raisebox{-0.5ex}[0ex][0ex]{\strut\resp}&&
\raisebox{-0.5ex}[0ex][0ex]{\strut\init}\\
\bullet\ar@{=>}[d]&\succ\ar[l]_{\enc{\enc{s}{a^{-1}}}{b}}&\bullet
\ar[l]_{\enc{\enc{s}{a^{-1}}}{b'}}\\
\bullet\ar[r]^{\enc{d}{s}}&}
\end{array}
\end{equation}

{\cpsa} produces the conclusion in Eq.~\ref{eq:shape analysis}, that
an instance of the initiator role must have transmitted its first
message, but it does not conclude that the strands agree on the key
used for the outer encryption.  When {\cpsa} is run using the amended
protocol in Eq.~\ref{eq:amended protocol}, the strands agree on the
key, and the authentication goal is achieved.

The contribution of this paper is a method of formalizing security
goals and the results of a {\cpsa} analysis in first-order logic such
that whenever a {\cpsa} analysis demonstrates that a security goal is
achieved, the logical sentence associated with the security goal will
be deducible from the shape analysis sentence with the relevant
{\cpsa} analysis.  The sentences associated with this example are
presented in Section~\ref{sec:example}.

\paragraph{Some Related Work.}

This paper is the result of implementing security goals as described
in Guttman~\cite{Guttman09}.  The original motivation for extracting
shape analysis sentences rather than following the procedure
in~\cite{Guttman09} was ease of implementation.  With shape analysis
sentences, most of the work is performed by a post-processing stage,
and there were only a few changes made to the core {\cpsa} program.
Only later it was realized the sense in which shape analysis sentences
completely characterize a shape analysis.

The Scyther tool~\cite{cremers06} integrates security goal
verification with its core protocol analysis algorithm.  Security
goals are easy to state as long as they can be expressed using a
predefined vocabulary, however, there is no sense in which Scyther
goals characterize an analysis.

The Protocol Composition Logic~\cite{datta05} provides a contrasting
approach to specifying security goals.  It extends strand spaces by
adding an operational semantics as a small set of reduction rules, and
a run of a protocol is a sequence of reduction steps derived from an
initial configuration.  The logic is a temporal logic interpreted over
runs.  The logic is more expressive than what is described within this
paper at the cost of added complexity.

\paragraph{Structure of this Paper.}

Section~\ref{sec:alg and morphs} describes the formalism on which
{\cpsa} is built, Section~\ref{sec:sas} presents the logic built upon
that formalism, and Section~\ref{sec:example} displays the example
above in full detail.

\paragraph{Notation.}

A finite sequence is a function from an initial segment of the natural
numbers.  The length of a sequence~$X$ is~$|X|$, and
sequence~$X=\seq{X(0),\ldots, X(n-1)}$ for $n=|X|$.  If~$S$ is a set,
then~$S^\ast$ is the set of finite sequences over~$S$, and~$S^+$ is the
non-empty finite sequences over~$S$.  The prefix of sequence~$X$ of
length~$n$ is~$\prefix{X}{n}$.

\section{Message Algebras and Homomorphisms}\label{sec:alg and morphs}

The two details of {\cpsa}'s formalism that dictate changes to the
logic of security goals are the fact that in {\cpsa}, a message
algebra is an order-sorted quotient term algebra and homomorphisms are
strand-oriented, not node-oriented.  The issues surrounding
homomorphisms will be described later.

\begin{figure}
$$\begin{array}{ll@{{}\colon{}}ll}
\mbox{Sorts:}&
\multicolumn{3}{l}{\mbox{$\top$, $\srt{A}$, $\srt{S}$, $\srt{D}$}}\\
\mbox{Subsorts:}&
\multicolumn{3}{l}{\mbox{$\srt{A}<\top$, $\srt{S}<\top$, $\srt{D}<\top$}}\\
\mbox{Operations:}&(\cdot,\cdot)&\top\times\top\to\top& \mbox{Pairing}\\
&\enc{\cdot}{(\cdot)}&\top\times\srt{A}\to\top&\mbox{Asymmetric encryption}\\
&\enc{\cdot}{(\cdot)}&\top\times\srt{S}\to\top&\mbox{Symmetric encryption}\\
&(\cdot)^{-1}&\srt{A}\to\srt{A}& \mbox{Asymmetric key inverse}\\
&(\cdot)^{-1}&\srt{S}\to\srt{S}& \mbox{Symmetric key inverse}\\
\mbox{Equations:}&\multicolumn{3}{l}{(x^{-1})^{-1}=x\mbox{ for $x:\srt{A}$}}\\
 &\multicolumn{3}{l}{y^{-1}=y\mbox{ for $y:\srt{S}$}}
\end{array}$$
\caption{Simple Crypto Algebra Signature}\label{fig:signature}
\end{figure}

An order-sorted algebra is a generalization of a many-sorted algebra
in which sorts may be partially ordered~\cite{GoguenMeseguer92}.  The
carrier sets associated with ordered sorts are related by the subset
relation.

Figure~\ref{fig:signature} shows the simplification of the {\cpsa}
message algebra signature used by the examples in this paper.
Sort~$\top$ is the sort of all messages.  Messages of sort~$\srt{A}$
(asymmetric keys), sort~$\srt{S}$ (symmetric keys), and sort~$\srt{D}$
(data) are called \emph{atoms}.  Messages are generated from the atoms
using encryption $\enc{\cdot}{(\cdot)}$ and pairing $(\cdot,\cdot)$,
where the comma operation is right associative and parentheses are
omitted when the context permits.

Each variable~$x$ in an order-sorted term has a unique sort~$S$.  The
\emph{declaration} of~$x$ is $x\colon S$.  The set of variables that
occur in term~$t$ is $\vars(t)$.

The quotient term algebra generated by declarations~$X$ over the
signature in Figure~\ref{fig:signature} is written $\alga_X$.  It
is the carrier set of sort~$\top$.  The canonical representative of
each member of $\alga_X$ is the term with the fewest occurrences of
the $(\cdot)^{-1}$ operation.  Unification and matching can be
implemented in such a way that only canonical terms are
considered~\cite[Appendix~B]{cpsadesign09}.

A message~$t_0$ is \emph{carried by}~$t_1$, written $t_0\sqsubseteq
t_1$ if~$t_0$ can be derived from~$t_1$ given the right set of keys,
that is $\sqsubseteq$ is the smallest reflexive, transitive relation
such that $t_0\sqsubseteq t_0$, $t_0\sqsubseteq (t_0, t_1)$,
$t_1\sqsubseteq (t_0, t_1)$, and $t_0\sqsubseteq\enc{t_0}{t_1}$.

The use of a message algebra that is order-sorted dictates that the
logic used to express the characteristic sentence associated with a
shape analysis is also order-sorted.  Furthermore, the signature for
the logic must inherit the sorts and subsort relations from the
message algebra.

\paragraph{Implementation-Oriented Strand Spaces.}

A run of a protocol is viewed as an exchange of messages by a finite
set of local sessions of the protocol.  Each local session is called a
\index{strand}\emph{strand}~\cite{ThayerEtal99}.  The behavior of a
strand, its \index{trace}\emph{trace}, is a non-empty sequence of
messaging events.  An \index{event}\emph{event} is either a message
transmission or a reception.  Outbound message $t\in\alga_X$ is
written as~$\outbnd t$, and inbound message~$t$ is written as~$\inbnd
t$.  A message \index{originates}\emph{originates} in a trace if it is
carried by some event and the first event in which it is carried is
outbound.

A \emph{strand space}~$\Theta_X$ is a finite map from a set of strands
to their traces.  {\cpsa} represents a set of strands as an initial
segment of the natural numbers, therefore, a strand space is a
sequence of traces.  The \emph{nodes} of a strand space are
$\nodes(\Theta_X)=\{(s, i)\mid s\in\sdom(\Theta_X), 0\le
i<|\Theta_X(s)|\}$.  The event at node $n=(s,i)$ is
$\evt_\Theta(s,i)=\Theta(s)(i)$.

In a strand space, a message that originates in exactly one trace is
\emph{uniquely originating}, and represents a freshly chosen value.  A
message that originates nowhere and is never used by the adversary to
decrypt or encrypt a message is \emph{non-originating}, and represents
an uncompromised key.

A protocol~$P$ is a finite set of traces, which are the \emph{roles}
of the protocol.  Strand~$s\in\sdom(\Theta_X)$ is an
\emph{elaboration} of role~$r\in P$ if $\Theta_X(s)$ is a prefix of
the result of applying some substitution~$\sigma$ to~$r$.  An example
of a protocol is in Eq.~\ref{eq:cpsa protocol} in
Section~\ref{sec:example}.

\paragraph{Skeletons.}

A skeleton represents all or part of the regular portion of an
execution.  A skeleton contains a strand space, a partial ordering of
its nodes, assumptions about uncompromised keys and freshly generated
atoms, and role associations.

A \emph{skeleton} $k=\skel_X(\rl,P,\Theta_X,\prec,N,U)$, where
$\rl\colon\sdom(\Theta_X)\to P$ is a role map,~$\prec$ is a strict
partial ordering of the nodes,~$N$ is a set of atoms, none of which
originate in a trace in~$\Theta_X$, and~$U$ is a set of atoms, all of
which originate in no more than one trace in~$\Theta_X$.  In
addition,~$\prec$ must order the node for each event that receives a
uniquely originating atom after the node of its transmission, so as to
model the idea that the atom represents a value freshly generated when
it is transmitted.

The above definition of a skeleton is useful for defining the
semantics of shape analysis sentences, but it does not reflect the
syntax used by {\cpsa}.  In {\cpsa} syntax, the trace and the role
associated with a strand is specified by an
\index{instance}\emph{instance}.  An instance is of the form
$\insta(r,h,\sigma)$, where~$r$ is a role, $h$ specifies the length of
a trace instantiated from the role, and~$\sigma$ specifies how to
instantiate the variables in the role to obtain the trace.  Thus the
trace associated with $\insta(r,h,\sigma)$ is
$\sigma\circ\prefix{r}{h}$, the prefix of length~$h$ that results from
applying~$\sigma$ to~$r$.

In the {\cpsa} syntax, the role map and sequence of traces are
replaced by a sequence of instances.  So for skeleton
$\skel_X(\rl,P,\Theta_X,\prec,N,U)$, the {\cpsa} syntax is
$\skel_X(P,I,\prec,N,U)$, where for each $s\in\sdom(\Theta_X)$,
$I(s)=\insta(r,h,\sigma)$, $r=\rl(s)$, and the trace of
$\insta(r,h,\sigma)$ is $\Theta_X(s)$.

Two examples of skeletons are displayed in Figure~\ref{fig:blanchet's
  shape analysis} in Section~\ref{sec:example}.

\paragraph{Homomorphisms.}

Let $k_0=\skel_X(rl_0,P,\Theta_0,\prec_0,N_0,U_0)$ and
$k_1=\skel_Y(rl_1,P,\break\Theta_1,\prec_1,N_1,U_1)$ be skeletons.  There is
a \emph{skeleton homomorphism} $(\phi,\sigma)\colon k_0\mapsto k_1$
if~$\phi$ and~$\sigma$ are maps with the following properties:
\begin{enumerate}
\item $\phi$ maps strands of~$k_0$ into those of~$k_1$, and nodes as
  $\phi((s,i))=(\phi(s),i)$, that is $\phi$ is in
  $\sdom(\Theta_0)\to\sdom(\Theta_1)$;
\item $\sigma\colon\alga_X\to\alga_Y$ is a message algebra homomorphism;
\item $n\in\nodes(\Theta_0)$ implies
  $\sigma(\evt_{\Theta_0}(n))=\evt_{\Theta_1}(\phi(n))$;
\item $n_0\prec_0
n_1$ implies $\phi(n_0)\prec_1\phi(n_1)$;
\item $\sigma(N_0)\subseteq N_1$;
\item $\sigma(U_0)\subseteq U_1$;
\item\label{item:orig} $t\in U_0$ implies
  $\phi(\orig_{k_0}(t))\subseteq\orig_{k_1}(\sigma(t))$;
\end{enumerate}
where $\orig_k(t)$ is the set of nodes of events at which~$t$
originates.  Item~\ref{item:orig} says the node at which an atom
declared to be uniquely originating is preserved by homomorphisms.
Note that~$\phi$ is a strand mapping, not a node mapping as
in~\cite{Guttman09}.

\section{Shape Analysis Sentences}\label{sec:sas}

Given the definitions in the previous section, the language~$\lang(P)$
used for shape analysis sentences is quite constrained.  The signature
for terms extends the one used for the underlying message algebra with
a sort~$\srt{Z}$.  Variables of this sort denote strands.

Security goals make use of protocol specific and protocol independent
predicates.  For each role $r\in P$, there are strand length
predicates and strand parameter predicates.  There are $|r|$ unary
length predicates $P[r,h]:\srt{Z}$, with $1\leq h\leq|r|$.  Relative
to skeleton~$k$, $P[r,h](z)$ asserts that strand~$z$ in~$k$ is an
instance of~$r$ and has a length of at least~$h$.  For each
variable~$x:S$ that occurs in~$r$, there is a binary parameter
predicate $P[r,x]:\srt{Z}\times{}S$.  Relative to skeleton~$k$,
$P[r,x](z,t)$ asserts that strand~$z$ in~$k$ is an instance of~$r$ in
which~$x$ is instantiated as~$t$.

For each $B\in\{\srt{A},\srt{S},\srt{D}\}$, there are unary predicates
$\cn{non}\colon B$ and $\cn{uniq}\colon B$.  $\cn{non}(t)$ asserts~$t$
is non-originating in~$k$ and $\cn{uniq}(t)$ asserts~$t$ uniquely
originates in~$k$.

Let $m$ be the length of the longest role in~$P$.  There are $m^2$
binary precedence predicates $\cn{prec}[i,j]:\srt{Z}\times\srt{Z}$ for
$0\leq i,j< m$.  $\cn{prec}[i,j](x,y)$ asserts that node $(x,i)$ is
before node $(y,j)$ in~$k$.  There are $3m$ binary origin predicates
$\cn{orig}[i]:B\times\srt{Z}$, with $0\leq i< m$ and $B$ as before.
$\cn{orig}[i](t,z)$ asserts that~$t$ uniquely originates in~$k$ at
node $(z,i)$.  The predicate \cn{false} has arity zero and, of course,
equality is binary.

To improve the readability of formulas to follow, we write
$\cn{prec}(x,i,y,j)$ for $\cn{prec}[i,j](x,y)$ and $\cn{orig}(t,z,i)$
for $\cn{orig}[i](t,z)$.

We define~$\form(k)=(Y,\Phi)$, where~$\Phi$ is $k$'s skeleton formula,
and~$Y$ is the formula's declarations.  Using the {\cpsa} skeleton
syntax presented in Section~\ref{sec:alg and morphs}, let
$k=\skel_X(P,I,\prec,N,U)$.  The declarations~$Y$ is~$X$ augmented
with a fresh variable~$z_s\colon\srt{Z}$ for each strand
$s\in\sdom(I)$. Let~$\prec^-$ be the transitive reduction
of~$\prec$. The \emph{skeleton formula}~$\Phi$ of~$k$ is a conjunction
of atomic formulas composed as follows.

\begin{itemize}
\item For each $s\in\sdom(I)$, let $I(s)=\insta(r,h,\sigma)$.  Assert
  $P[r,h](z_s)$.  For each variable $x\in\vars(\prefix{r}{h})$ and term
  $t=\sigma(x)$, assert $P[r,x](z_s,t)$.
\item For each $(s,i)\prec^-(s',i')$ with $s\neq s'$, assert
  $\cn{prec}(z_s,i,z_{s'},i')$
\item For each $t\in N$, assert $\cn{non}(t)$.
\item For each $t\in U$, assert $\cn{uniq}(t)$.
\item For each $t\in U$ and $(s,i)\in\orig_k(t)$, assert
  $\cn{orig}(t,z_s, i)$.
\end{itemize}

Given a set of homomorphisms $\delta_i\colon k_0\mapsto k_i$, its shape
analysis sentence is
\begin{equation}\label{eqn:shape sentence}
\sent(\delta_i\colon k_0\mapsto k_i)=\all{X_0}(\Phi_0\supset
\bigvee_i\some{X_i}(\Delta_i\wedge\Phi_i)),
\end{equation}
where $\form(k_0)=(X_0,\Phi_0)$.  The same procedure produces~$X_i$
and~$\Phi_i$ for shape~$k_i$ with one proviso---the variables in
$X_i$ that also occur in~$X_0$ must be renamed to avoid trouble while
encoding the structure preserving maps~$\delta_i$.

The structure preserving maps~$\delta_i=(\phi_i,\sigma_i)$ are encoded
in~$\Delta_i$ by a conjunction of equalities.  Map~$\sigma_i$ is coded
as equalities between a message algebra variable in the domain
of~$\sigma_i$ and the term it maps to.  Map~$\phi_i$ is coded as
equalities between strand variables in~$\Phi_0$ and strand variables
in~$\Phi_i$.  Let~$Z_0$ be the sequence of strand variables freshly
generated for~$k_0$, and~$Z_i$ be the ones generated for~$k_i$.  The
strand mapping part of~$\Delta_i$ is
$\bigwedge_{j\in\sdom(\Theta)}Z_0(j)=Z_i(\phi_i(j))$.

An example shape analysis sentence is displayed in
Figure~\ref{fig:blanchet's shape analysis sentence}.  A strand length
predicate $P[r,h](z)$ is written $r_h(z)$ with the protocol left
implicit, and similarly for the strand parameter predicates.

\paragraph{Semantics of Skeleton Formulas.}

Let $k=\skel_X(\rl,P,\Theta,\prec,N,U)$.  When formula~$\Phi$ is
satisfied in skeleton~$k$ with order-sorted variable assignment
$\alpha$, we write $k,\alpha\models\Phi$.  For $x:S\in X$, $\alpha(x)$
is in the carrier set of $\alga_X$ for sort~$S$.  For $x:\srt{Z}$,
$\alpha(x)\in\sdom(\Theta)$.  We write~$\bar\alpha$ when~$\alpha$ is
extended to terms in the obvious way.  When sentence~$\Sigma$ is
satisfied in skeleton~$k$, we write $k\models\Sigma$.

\begin{itemize}
\item $k,\alpha\models P[r,h](y)$ iff $h\leq|\Theta(s)|$ and
  $\prefix{\Theta(s)}{h}=\sigma\circ\prefix{r}{h}$ for some~$\sigma$,
  where $s=\alpha(y)$.
\item $k,\alpha\models P[r,x](y, t)$ iff
  $\prefix{\Theta(s)}{h}=\sigma\circ\prefix{r}{h}$ for
  some~$\sigma$ with $\sigma(x)=\bar\alpha(t)$, where $s=\alpha(y)$
  and $h$ is the smallest $\ell$ such that $x$ occurs in
  $\prefix{r}{\ell}$.
\item $k,\alpha\models\cn{prec}(x,i,y,j)$ iff
$(\alpha(x),i)\prec(\alpha(y),j)$.
\item $k,\alpha\models\cn{non}(t)$ iff $\bar\alpha(t)\in N$.
\item $k,\alpha\models\cn{uniq}(t)$ iff $\bar\alpha(t)\in U$.
\item $k,\alpha\models\cn{orig}(t,z,i)$ iff $\bar\alpha(t)\in U$ and
  $(\alpha(z),i)\in\orig_k(\bar\alpha(t))$.
\item $k,\alpha\models y=z$ iff $\bar\alpha(y)=\bar\alpha(z)$.
\item $k,\alpha\not\models\cn{false}$.
\end{itemize}

\begin{thm}\label{thm:skeleton models}
Let $\form(k_0)=(X,\Phi)$ and $\Sigma=\some{X}\Phi$.  Sentence~$\Sigma$ is
satisfied in~$k$ iff there is a homomorphism from~$k_0$ to
$k$, i.e.\ $k\models\Sigma$ iff
$\some{\delta}\delta\colon k_0\mapsto k$.
\end{thm}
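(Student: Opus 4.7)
The plan is to prove the biconditional by establishing a point-by-point correspondence between satisfying assignments for~$\Phi$ and skeleton homomorphisms, pairing each atomic conjunct of~$\Phi$ with one of the seven clauses of the homomorphism definition. Fix notation by writing $k=\skel_Y(\rl,P,\Theta,\prec,N,U)$ and $k_0=\skel_{X'}(\rl_0,P,\Theta_0,\prec_0,N_0,U_0)$, so that the variable set of $\form(k_0)$ decomposes as $X=X'\cup\{z_s:s\in\sdom(\Theta_0)\}$.

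For the right-to-left direction, I would take a homomorphism $(\phi,\sigma)\colon k_0\mapsto k$, define $\alpha(z_s)=\phi(s)$ and $\alpha(x)=\sigma(x)$, and verify $k,\alpha\models\Phi$ conjunct by conjunct. Each strand progress atom $P[r,h,x](z_s,\sigma_s(x))$ arising from $I(s)=\inst(r,h,\sigma_s)$ is witnessed by the substitution $\sigma\circ\sigma_s$: applying clause~3 of the homomorphism definition to the identity $\prefix{\Theta_0(s)}{h}=\sigma_s\circ\prefix{r}{h}$ yields $\prefix{\Theta(\phi(s))}{h}=(\sigma\circ\sigma_s)\circ\prefix{r}{h}$. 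The $\cn{prec}$, $\cn{non}$, $\cn{uniq}$, and $\cn{orig}$ conjuncts discharge directly against clauses~4--7.

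For the left-to-right direction, I would fix a satisfying $\alpha\colon X\to\alg{D}$, set $\phi(s)=\alpha(z_s)$, and let $\sigma$ be the unique extension of the restriction of $\alpha$ to~$X'$ to a message-algebra homomorphism $\alga_{X'}\to\alga_Y$. Clause~2 holds by construction. The strand progress conjuncts then force $\phi(s)\in\sdom(\Theta)$ (clause~1) and pin down $\prefix{\Theta(\phi(s))}{h}$ as $\sigma\circ\sigma_s\circ\prefix{r}{h}$, i.e.\ as $\sigma$ applied pointwise to $\prefix{\Theta_0(s)}{h}$ (clause~3). The remaining conjuncts yield clauses~4--7 one for one, exactly dually to the first direction.

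The hard part will be recovering clause~3 from the strand progress conjuncts in the forward direction. The semantics of $P[r,h,x]$ quantifies existentially over an auxiliary substitution \emph{separately} for each atom, so I must argue that the several witnesses provided for the distinct variables in $\sdom(\sigma_s)$ are mutually compatible and jointly reconstruct $\prefix{\Theta(\phi(s))}{h}$. The key observation is that $\Theta_0(s)=\sigma_s\circ\prefix{r}{h}$ depends only on the values~$\sigma_s$ assigns to the variables that occur in $\prefix{r}{h}$, and those values are exactly what the conjunction of $P[r,h,x]$ atoms records; since~$\alpha$ assigns a single value to each such variable in~$X'$, the trace at $\phi(s)$ is determined uniformly up to the single algebra homomorphism~$\sigma$, which is precisely what clause~3 demands.
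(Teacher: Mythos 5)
Your construction is exactly the paper's: in both directions the correspondence is $\alpha(z_s)=\phi(s)$ on the strand variables and $\alpha=\sigma$ on the message variables, and the paper likewise leaves the clause-by-clause verification of the homomorphism conditions to the reader. The one place where you go beyond the paper --- reconciling the per-atom existential witnesses for the strand progress predicates, which ultimately rests on uniqueness of matching against $\prefix{r}{h}$ in the quotient term algebra --- is a point the paper's proof silently glosses over, and your treatment of it is sound.
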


This theorem corrects the first of the two main results
from~\cite{Guttman09}, as that paper omits the \cn{orig} predicate.  A
later paper includes the \cn{orig} predicate~\cite{Guttman11a}, using
the symbol \cn{UnqAt}.

\begin{proof}
For the forward direction, assume~$\alpha$ is a variable assignment
for the variables in~$X$ such that $k,\alpha\models\Phi$, and let~$Z$
be the sequence of strand variables constructed while
generating~$\Phi$ from~$k_0$.  Then the pair of maps
$\delta=(\alpha\circ Z,\alpha)$ demonstrate a homomorphism from~$k_0$
to~$k$, i.e.\ each item in the definition of a skeleton homomorphism
in Section~\ref{sec:alg and morphs} is satisfied.

For the reverse direction, assume maps $\delta=(\phi,\sigma)$ are such
that $\delta\colon k_0\mapsto k$.  Then the desired variable assigment is
$$\alpha(x)=\left\{
\begin{array}{ll}
\phi(Z^{-1}(x))&x\in\sran(Z)\\
\sigma(x)&x\in\sdom(\sigma).
\end{array}\right.$$
\end{proof}

\paragraph{Deducing Security Goals.}  A shape analysis $\delta_i\colon
k_0\mapsto k_i$ is \emph{complete} if for each realized skeleton~$k$,
$\delta\colon k_0\mapsto k\mbox{ iff }\some{i,\delta'}\delta'\colon
k_i\mapsto k$.  There is an ongoing effort to show that whenever
{\cpsa} terminates it produces a complete shape analysis, however,
preliminary analysis suggests that with the exception of specially
constructed, artificial protocols, {\cpsa}'s output is complete.  See
Appendix~\ref{sec:artificial protocol} for an example of a troublesome
artificial protocol.

The next theorem captures the sense in which a shape analysis sentence
characterizes a complete shape analysis.

\begin{thm}\label{thm:sentence implies}
Let $\delta_i\colon k_0\mapsto k_i$ be a complete shape analysis.
Then the shape analysis sentence~$\Sigma=\sent(\delta_i\colon
k_0\mapsto k_i)$ is satisfied in all realized skeletons~$k$,
i.e.\ $k\models\Sigma$.
\end{thm}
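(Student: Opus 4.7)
The plan is to reduce $k\models\Sigma$ to a factorization-of-homomorphisms question, discharge that question using the completeness hypothesis, and then translate the resulting factorization back into a satisfying assignment via Theorem~\ref{thm:skeleton models}.

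First, I would fix a realized skeleton $k$ and a variable assignment $\alpha_0$ on $X_0$ with $k,\alpha_0\models\Phi_0$; the goal becomes producing an index $i$ and an extension of $\alpha_0$ to the fresh variables of $X_i$ that satisfies $\Delta_i\wedge\Phi_i$. By the forward direction of Theorem~\ref{thm:skeleton models}, the assignment $\alpha_0$ determines a homomorphism $\delta=(\phi,\sigma)\colon k_0\mapsto k$ whose strand component is $\phi=\alpha_0\circ Z_0$ and whose message-algebra component $\sigma$ is the restriction of $\alpha_0$ to the message-algebra variables of $X_0$.

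Next, I would apply completeness to $\delta$ to obtain an index $i$ and a homomorphism $\delta'=(\phi',\sigma')\colon k_i\mapsto k$ that factors $\delta$, in the sense that $\phi=\phi'\circ\phi_i$ and $\sigma=\sigma'\circ\sigma_i$. The reverse direction of Theorem~\ref{thm:skeleton models} then converts $\delta'$ into a variable assignment $\alpha_i$ on $X_i$ with $k,\alpha_i\models\Phi_i$, and the renaming convention that keeps $X_i$ disjoint from $X_0$ lets the two assignments combine into a single assignment $\alpha$ on $X_0\cup X_i$.

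It remains to check that $\alpha$ satisfies $\Delta_i$. Each strand equality $Z_0(j)=Z_i(\phi_i(j))$ holds because $\alpha(Z_0(j))=\phi(j)=\phi'(\phi_i(j))=\alpha(Z_i(\phi_i(j)))$, and each equality $x=\sigma_i(x)$ coming from the message-algebra part of $\delta_i$ holds because $\alpha(x)=\sigma(x)=\sigma'(\sigma_i(x))=\alpha(\sigma_i(x))$. Combined with $k,\alpha\models\Phi_0\wedge\Phi_i$, this yields $k,\alpha\models\Delta_i\wedge\Phi_i$, hence $k\models\Sigma$. The main obstacle is the factorization step: the stated form of completeness only promises the existence of \emph{some} homomorphism $\delta'$ from some $k_i$ to $k$, whereas the proof needs a $\delta'$ whose composition with $\delta_i$ recovers the specific $\delta$ built from $\alpha_0$; part of the work is pinning down that the intended notion of completeness really is this stronger factorization property, presumably read off from the way {\cpsa} builds each $k_i$ as a homomorphic image of $k_0$ together with the witnessing homomorphism $\delta_i$.
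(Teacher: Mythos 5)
Your proof takes essentially the same route as the paper's: reduce satisfaction of the hypothesis to a homomorphism $\delta\colon k_0\mapsto k$ via the forward direction of Theorem~\ref{thm:skeleton models}, invoke completeness to obtain a shape $k_i$ admitting a homomorphism into $k$, and convert back to a satisfying assignment for the chosen disjunct via the reverse direction. The difference is that the paper's proof is two sentences long and never mentions $\Delta_i$, whereas you correctly observe that satisfying $\some{X_i}(\Delta_i\wedge\Phi_i)$ under a given $\alpha_0$ requires the homomorphism $\delta'\colon k_i\mapsto k$ to \emph{factor} the specific $\delta$ determined by $\alpha_0$, i.e.\ $\delta=\delta'\circ\delta_i$, which is strictly stronger than the bare existence statement in the paper's definition of a complete shape analysis. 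That concern is legitimate and is a gap in the paper's argument as literally stated rather than in yours; the intended reading (consistent with how {\cpsa} constructs each shape as a homomorphic image of $k_0$ witnessed by $\delta_i$) is that every homomorphism from $k_0$ to a realized skeleton factors through some $\delta_i$, and with that strengthened completeness property your verification of the equalities in $\Delta_i$ supplies exactly the detail the paper omits.
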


\begin{proof}
Shapes are minimal among realized skeletons, so there is no realized
skeleton in the image of~$k$ that is not in the image of one of the
shapes.  Therefore, by Theorem~\ref{thm:skeleton models}, the negation
of the hypothesis of the implication is satisfied in all realized
skeletons that are not in the image of~$k_0$, and the disjunction is
satisfied in the remaining realized skeletons.
\end{proof}

Let~$\Sigma$ be the shape analysis sentence of a complete shape
analysis and~$\Psi$ be a security goal.  If $\Sigma\supset\Psi$ is a
theorem in order-sorted first-order logic, then~$\Psi$ is satisfied in
all realized skeletons and its protocol achieves this goal.

Since $\prec$ is transitive, transitivity of \fn{prec} can also be
used to prove a protocol achieves a goal.  That is,
\[\cn{prec}(x,i,y,j)
\land\cn{prec}(y,j,z,k)\supset
\cn{prec}(x,i,z,k).\]

\section{Detailed Example}\label{sec:example}

The simplified version of the Denning-Sacco
key distribution protocol~\cite{DenningSacco81} due to Bruno Blanchet
is now revisited.
$$\begin{array}{r@{{}:{}}l}
A\to B&\enc{\enc{s}{a^{-1}}}{b}\\
B\to A&\enc{d}{s}
\end{array}$$
Symmetric key~$s$ is freshly generated, asymmetric keys $a^{-1}$ and
$b^{-1}$ are uncompromised, and the goal of the protocol is to keep
data~$d$ secret.  This {\cpsa} description of the protocol in
Eq.~\ref{eq:protocol}, has an initiator and a responder role.
\begin{equation}\label{eq:cpsa protocol}
\begin{array}{r@{{}={}}l}
\init(a,b\colon\srt{A},s\colon\srt{S}, d\colon\srt{D})
&\seq{\outbnd\enc{\enc{s}{a^{-1}}}{b},\inbnd\enc{d}{s}}\\
\resp(a,b\colon\srt{A},s\colon\srt{S}, d\colon\srt{D})
&\seq{\inbnd\enc{\enc{s}{a^{-1}}}{b},\outbnd\enc{d}{s}}
\end{array}
\end{equation}

\begin{figure}
$$\begin{array}{@{}r@{}c@{}l@{}}
k_0&{}={}&\skel_X(\begin{array}[t]{@{}ll}
\{\init(a_0,b_0,s_0,d_0),\resp(a_1,b_1,s_1,d_1)\},
&\mbox{Protocol}\\
\seq{\insta(\resp,2,\{a_1\mapsto a,b_1\mapsto b,s_1\mapsto s,d_1\mapsto d\})},
&\mbox{Instances}\\
\emptyset,
&\mbox{Node orderings}\\
\{a^{-1},b^{-1}\},
&\mbox{Non-origination}\\
\{s\})
&\mbox{Unique origination}
\end{array}\\
&&\mbox{where $X=a,b\colon\srt{A},s\colon\srt{S}, d\colon\srt{D}$}\\
k_1&{}={}&\skel_Y(\begin{array}[t]{@{}ll}
\{\init(a_0,b_0,s_0,d_0),\resp(a_1,b_1,s_1,d_1)\},
&\mbox{Protocol}\\
\langle\begin{array}[t]{@{}l}
\insta(\resp,2,\{a_1\mapsto a,b_1\mapsto b,s_1\mapsto s,d_1\mapsto d\}),\\
\insta(\init,1,\{a_0\mapsto a,b_0\mapsto b',s_0\mapsto s\})\rangle
\end{array}
&\begin{array}[t]{@{}l}
\mbox{Instances}\\
\mbox{\emph{Note $b_0\mapsto b'$ not $b$!}}
\end{array}\\
\{(1,0)\prec(0,0)\},
&\mbox{Node orderings}\\
\{a^{-1},b^{-1}\},
&\mbox{Non-origination}\\
\{s\})
&\mbox{Unique origination}
\end{array}\\
&&\mbox{where $Y=a,b,b'\colon\srt{A},s\colon\srt{S}, d\colon\srt{D}$}\\
\delta_1&{}={}&(\seq{0},\{a\mapsto a, b\mapsto b, s\mapsto s, d\mapsto d\})
\end{array}$$
\caption{Shape Analysis for Blanchet's
  Protocol}\label{fig:blanchet's shape analysis}
\end{figure}

The protocol was constructed with a known flaw for expository
purposes, and as a result the secret is exposed due to an
authentication failure.
The desired authentication goal is:
$$\begin{array}{l}
\all{a,b\colon\srt{A}, s\colon\srt{S}, d\colon\srt{D}, z_0\colon\srt{Z}}(\\
\quad\resp_2(z_0)\wedge
\resp_a(z_0,a)\wedge
\resp_b(z_0,b)\wedge
\resp_s(z_0,s)\wedge
\resp_d(z_0,d)\wedge{}\\
\quad\cn{non}(a^{-1})\wedge
\cn{non}(b^{-1})\wedge
\cn{uniq}(s)\supset\some{z_1\colon\srt{Z}}(\init_1(z_1)\wedge\init_b(z_1,b)))
\end{array}$$
that is, when the responder~($B$) runs to completion, there is an
initiator~($A$) that is using~$b$ for the encryption of its initial
message.

To investigate this goal, we ask {\cpsa} to find out what other
regular behaviors must occur when a responder runs to completion by
giving {\cpsa} skeleton~$k_0$ in Figure~\ref{fig:blanchet's shape
  analysis}.  {\cpsa} produces shape~$k_1$ that shows that an
initiator must run, but it need not use the same key to encrypt its
first message.  The shape analysis sentence for this scenario is
displayed in Figure~\ref{fig:blanchet's shape analysis sentence}.
Needless to say, the authentication goal cannot be deduced from this
sentence due to the man-in-the-middle attack.  If one repeats the
analysis using the protocol in Eq.~\ref{eq:amended protocol}, the
generated shape analysis sentence can be used to deduce the
authentication goal.

\begin{figure}
$$\begin{array}{l}
\all{a_0,b_0\colon\srt{A}, s_0\colon\srt{S}, d_0\colon\srt{D}, z_0\colon\srt{Z}}(\\
\quad\resp_2(z_0)\wedge
\resp_a(z_0,a_0)\wedge
\resp_b(z_0,b_0)\wedge
\resp_s(z_0,s_0)\wedge
\resp_d(z_0,d_0)\wedge{}\\
\quad\cn{non}(a_0^{-1})\wedge
\cn{non}(b_0^{-1})\wedge
\cn{uniq}(s_0)\\
\quad\supset\\
\quad\some{a_1,b_1,b_2\colon\srt{A}, s_1\colon\srt{S}, d_1\colon\srt{D}, z_1,z_2\colon
 \srt{Z}}(\\
\qquad z_0=z_1\wedge a_0=a_1\wedge b_0=b_1\wedge s_0=s_1\wedge d_0=d_1\wedge
\resp_2(z_1)\wedge{}\\
\qquad\resp_a(z_1,a_1)\wedge
\resp_b(z_1,b_1)\wedge
\resp_s(z_1,s_1)\wedge
\resp_d(z_1,d_1)\wedge{}\\
\qquad\init_1(z_2)\wedge
\init_a(z_2,a_1)\wedge
\init_b(z_2,b_2)\wedge
\init_s(z_2,s_1)\wedge\cn{orig}(s_1,z_2,0)\wedge{}\\
\qquad\cn{prec}(z_2,0,z_1,0)\wedge\cn{non}(a_1^{-1})\wedge
\cn{non}(b_1^{-1})\wedge
\cn{uniq}(s_1)))
\end{array}$$
\caption{Shape Analysis Sentence for Blanchet's
  Protocol}\label{fig:blanchet's shape analysis sentence}
\end{figure}

\section{Conclusion}\label{sec:conclusion}

This paper presented a method for extracting a sentence that
completely characterizes a run of {\cpsa} and showed that logical
deduction can then be used to determine if a security goal is
satisfied.  To ensure the fidelity of the translation between {\cpsa}
output and a shape analysis sentence, an order-sorted first-order
logic is employed.  Furthermore, the first-order language used for
formulas is dictated by the {\cpsa} syntax for skeletons and
the formalization of homomorphisms used by {\cpsa}.

\section*{Acknowledgment}

Paul Rowe and Jon Millen provided valuable feedback on an early draft
of this paper.

\appendix
\section{Artificial Protocol}\label{sec:artificial protocol}

This section presents an example of a protocol that causes {\cpsa} to
fail to produce a complete shape analysis.

\begin{equation}\label{eq:artificial protocol}
\begin{array}{r@{{}={}}l}
\init(a\colon\srt{A},d\colon\srt{D})&
\seq{\outbnd\enc{d}{a},\inbnd d}\\
\resp(x\colon\top,d\colon\srt{D})&
\seq{\inbnd x, \outbnd d}
\end{array}
\end{equation}

The initiator in the
protocol specifies half of a common authentication pattern.  Assuming
nonce~$d$ is freshly generated, and key~$a^{-1}$ is uncompromised, an
execution of the protocol in which an instance of the initiator role
runs to completion must include other regular behavior by a strand
that possesses the decryption key~$a^{-1}$.

It's the responder role that is artificial.  Its first event is the
reception of a message of any sort, and then it transmits a message of
sort data.  There are many ways in which an instance of the responder
role can serve as the other half of the authentication pattern, such as:
\begin{equation}\label{eq:missing shape analysis}
\begin{array}{rcll}
\xymatrix@R=1em@C=3em{\raisebox{-1ex}[0ex][0ex]{\strut\init}&&
\raisebox{-1ex}[0ex][0ex]{\strut\resp}\\
\bullet\ar@{=>}[d]\ar[r]^{\enc{d}{a}}&\prec\ar[r]^{\enc{d}{a}}&\bullet\ar@{=>}[d]\\
\bullet&\ar[l]_{d}\succ&\bullet\ar[l]_{d}}
&
\raisebox{-4ex}{or}
&
\xymatrix@R=1em@C=3em{\raisebox{-1ex}[0ex][0ex]{\strut\init}&&
\raisebox{-1ex}[0ex][0ex]{\strut\resp}\\
\bullet\ar@{=>}[d]\ar[r]^{\enc{d}{a}}&\prec
\ar[r]^{(\enc{d}{a},\enc{d}{a})}&\bullet\ar@{=>}[d]\\
\bullet&\ar[l]_{d}\succ&\bullet\ar[l]_{d}}
&
\raisebox{-4ex}{$\cdots$}
\end{array}
\end{equation}

Yet consider an operational interpretation of the responder strand in
Eq.~\ref{eq:missing shape analysis}.  The role states that it first
receives a message without knowing its structure, but the strand
interprets that message as something it can decrypt and extracts the
nonce.  Formalizations based on an operation semantics, such as what is
used for the Protocol Composition Logic~\cite{datta05}, exclude the
executions in Eq.~\ref{eq:missing shape analysis}, but there in
nothing in strand space theory that prohibits those executions.

\section{Node-Oriented Shape Analysis Sentences}\label{sec:node-oriented}

The language~$\bar\lang(P)$ used for node-oriented shape analysis
sentences more in tune with the language described
in~\cite{Guttman11a}.  The signature for terms extends the one used
for the underlying message algebra with a sort~$\srt{N}$, the sort for
nodes.

Security goals make use of protocol specific and protocol independent
predicates.  For each role $r\in P$ and $i<|r|$, there is a protocol
specific unary position predicate $P[r,i]:\srt{N}$.  For each
role~$r\in P$ and variable~$x:S$ that occurs in $r$, there is a
protocol specific binary parameter predicate $P[r,x]:\srt{N}\times S$.
The protocol independent unary predicates are $\mathsf{non}: B$ and
$\mathsf{uniq}: B$ for each atomic sort~$B\in\{\srt{A},
\srt{S},\srt{D}\}$.  The predicate \cn{false} has arity zero.  The
remaining protocol independent predicates are binary, and are
$\mathsf{orig}: B\times\srt{N}$,
$\mathsf{sprec}:\srt{N}\times\srt{N}$,
$\mathsf{prec}:\srt{N}\times\srt{N}$, and equality.

Soon we define~$\form(k)=(Y,\Phi)$, where~$\Phi$ is $k$'s skeleton
formula, and~$Y$ is the formula's declarations, but first we define
the relevant nodes of a skeleton~$N$.  Let
$k=\skel_X(P,I,\prec,\nu,\upsilon)$ and let~$\prec^-$ be the
transitive reduction of~$\prec$.  Recall that~$\Theta_X$ is the strand
space defined by~$I$.  The \emph{relevant nodes} of~$k$ are $N=N_s\cup
N_\prec\cup N_\upsilon$ where

\[\renewcommand{\arraystretch}{1.5}
\begin{array}{r@{{}={}}l}
  N_s&\{(s,i)\mid s\in\sdom(\Theta_X)\land
  i=|\Theta_X(s)|-1\}\\

  N_\prec&\{(s,i)\mid
  \renewcommand{\arraystretch}{1}
  \begin{array}[t]{@{}l}
    (s',i')\in\nodes(\Theta_X)\land s\neq s'\\
    \quad\land((s,i)\prec^- (s',i')\lor(s',i')\prec^-(s,i))\}
  \end{array}\\

  N_\upsilon&\{(s,i)\mid t\in\upsilon,(s,i)\in\orig_k(t)\}
\end{array}\]

For~$\form(k)=(Y,\Phi)$, the declarations~$Y$ is~$X$ augmented with a
fresh variable of sort~\srt{N} for each node in~$N$, and let $v(n)$ be
the variable associated with node~$n$.

The formula~$\Phi$ is a conjunction of atomic formulas composed as
follows.
\begin{itemize}
\item For each $(s,i)\in N$, assert $P[r,i](v(s,i))$,
  where $I(s)=\insta(r,h,\sigma$).
\item For each $s\in\sdom(I)$, let
  $I(s)=\insta(r,h,\sigma)$.  For each variable
  $x\in\vars(\prefix{r}{h})$ and term $t=\sigma(x)$, assert
  $P[r,x](v(s,h-1),t)$.
\item For each $(s,i),(s,i')\in N$ such that $i<i'$, assert
  $\cn{sprec}(v(s,i),v(s,i'))$.
\item For each $(s,i)\prec^-(s',i')$ such that $s\neq s'$, assert
  $\cn{prec}(v(s,i),v(s',i'))$.
\item For each $t\in\nu$, assert $\cn{non}(t)$.
\item For each $t\in\upsilon$, assert $\cn{uniq}(t)$.
\item For each $t\in\upsilon$ and node~$n$ such that
  $n\in\orig_k(t)$, assert $\cn{orig}(t,v(n))$.
\end{itemize}

Given a set of homomorphisms $\delta_i\colon k_0\mapsto k_i$, its shape
analysis sentence is
\begin{equation}\label{eqn:node-oriented shape sentence}
\sent(\delta_i\colon k_0\mapsto k_i)=\all{X_0}\Phi_0\supset
\bigvee_i\some{X_i}\Delta_i\wedge\Phi_i,
\end{equation}
where $\form(k_0)=(X_0,\Phi_0)$.  The same procedure produces~$X_i$
and~$\Phi_i$ for shape~$k_i$ with one proviso---the variables in
$X_i$ that also occur in~$X_0$ must be renamed to avoid trouble while
encoding the structure preserving maps~$\delta_i$.

The structure preserving maps~$\delta_i=(\phi_i,\sigma_i)$ are encoded
in~$\Delta_i$ by a conjunction of equalities.  Map~$\sigma_i$ is coded
as equalities between a message algebra variable in the domain
of~$\sigma_i$ and the term it maps to.  Map~$\phi_i$ is coded as
equalities between node variables in~$\Phi_0$ and node variables
in~$\Phi_i$.  Let~$v_0$ be the node variables freshly generated
for~$k_0$, and~$v_i$ be the ones generated for~$k_i$.  The
strand mapping part of~$\Delta_i$ is
\[\bigwedge_{(s,j)\in\sdom(v_0)}v_0(s,j)=v_i(\phi_i(s),j).\]

\paragraph{Semantics of Skeleton Formulas.}

Let $k=\skel_X(\rl,P,\Theta,\prec,N,U)$.  When formula~$\Phi$ is
satisfied in skeleton~$k$ with order-sorted variable assignment
$\alpha$, we write $k,\alpha\models\Phi$.  For $x:S\in X$, $\alpha(x)$
is in the carrier set of $\alga_X$ for sort~$S$.  For $x:\srt{N}$,
$\alpha(x)\in\nodes(\Theta)$.  We write~$\bar\alpha$ when~$\alpha$ is
extended to terms in the obvious way.  When sentence~$\Sigma$ is
satisfied in skeleton~$k$, we write $k\models\Sigma$.

\begin{itemize}
\item $k,\alpha\models P[r,i](y)$ iff
  $\alpha(y)=(s,i)$, and for some~$\sigma$,
  \[\prefix{\Theta_X(s)}{i+1}=\sigma\circ\prefix{r}{i+1}.\]

\item $k,\alpha\models P[r,x](y,t)$ iff
  $\alpha(y)=(s,i)$, $x$
  occurs in $\prefix{r}{i+1}$, and for some~$\sigma$ with
  $\sigma(x)=\bar\alpha(t)$,
  \[\prefix{\Theta_X(s)}{i+1}=\sigma\circ\prefix{r}{i+1}.\]
\end{itemize}

The interpretation of the protocol independent predicates is
straightforward.
\begin{itemize}
\item $k,\alpha\models\cn{prec}(y,z)$ iff
  $\alpha(y)\prec\alpha(z)$.
\item $k,\alpha\models\cn{sprec}(y,z)$ iff $\alpha(y)\prec\alpha(z)$,
  $\alpha(y)=(s,i)$, and $\alpha(z)=(s,i')$.
\item $k,\alpha\models\cn{non}(t)$ iff $\bar\alpha(t)\in\nu$.
\item $k,\alpha\models\cn{uniq}(y)$ iff $\bar\alpha(t)\in\upsilon$.
\item $k,\alpha\models\cn{orig}(t,y)$ iff $\bar\alpha(t)\in\upsilon$ and
  $\alpha(y)\in\orig_k(\bar\alpha(t))$.
\item $k,\alpha\models y=z$ iff $\bar\alpha(y)=\bar\alpha(z)$.
\item $k,\alpha\not\models\cn{false}$.
\end{itemize}

\begin{figure}
$$\begin{array}{l}
\all{a_0,b_0\colon\srt{A}, s_0\colon\srt{S}, d_0\colon\srt{D}, n_0\colon\srt{N}}\\
\quad\resp_1(n_0)\wedge\resp_{a}(n_0,a_0)\wedge
\resp_{b}(n_0,b_0)\\
\qquad{}\wedge
\resp_{s}(n_0,s_0)\wedge
\resp_{d}(n_0,d_0)\\
\qquad{}\wedge\cn{non}(a_0^{-1})\wedge
\cn{non}(b_0^{-1})\wedge
\cn{uniq}(s_0)\\
\quad\supset\\
\quad\some{a_1,b_1,b_2\colon\srt{A}, s_1\colon\srt{S},
  d_1\colon\srt{D}, n_1,n_2,n_3\colon
 \srt{N}}\\
\qquad n_0=n_1\wedge a_0=a_1\wedge b_0=b_1\wedge s_0=s_1\wedge
d_0=d_1\\
\qquad\quad\wedge\resp_1(n_1)\wedge
\resp_{a}(n_1,a_1)\wedge
\resp_{b}(n_1,b_1)\\
\qquad\quad\wedge
\resp_{s}(n_1,s_1)\wedge
\resp_{d}(n_1,d_1)\wedge\resp_0(n_2)\\
\qquad\quad\wedge\init_0(n_3)\wedge\init_a(n_3,a_1)\wedge
\init_b(n_3,b_2)\wedge
\init_s(n_3,s_1)\\
\qquad\quad\wedge\cn{orig}(s_1,n_2)\wedge
\cn{prec}(n_3,n_2)\wedge\cn{sprec}(n_2,n_1)\\
\qquad\quad\wedge\cn{non}(a_1^{-1})\wedge
\cn{non}(b_1^{-1})\wedge
\cn{uniq}(s_1)
\end{array}$$
\caption{Node-Oriented Shape Analysis Sentence for Blanchet's
  Protocol}\label{fig:node-oriented blanchet's shape analysis sentence}
\end{figure}

The node-oriented shape analysis sentence equivalent to the one in
Figure~\ref{fig:blanchet's shape analysis sentence} is in
Figure~\ref{fig:node-oriented blanchet's shape analysis sentence}.

Since $\prec$ is transitive, transitivity of \fn{prec} can be used to
prove a protocol achieves a goal.  That is,
\[\cn{prec}(x,y)
\land\cn{prec}(y,z)\supset
\cn{prec}(x,z).\]
Furthermore, $\cn{sprec}(x,y)\supset\cn{prec}(x,y)$.

\end{document}